\documentclass{article}
\usepackage{amsmath,amssymb,graphicx,authblk}

\providecommand{\texorpdfstring}[2]{#1}

\makeatletter
\newcounter{theorem}[section]
\renewcommand{\thetheorem}{\thesection.\arabic{theorem}}
\newenvironment{theorem}[1][]{\refstepcounter{theorem}\par\noindent\textbf{Theorem~\thetheorem.} \textit{#1}\;}{\par}
\newenvironment{lemma}[1][]{\refstepcounter{theorem}\par\noindent\textbf{Lemma~\thetheorem.} \textit{#1}\;}{\par}
\newenvironment{proof}{\par\noindent\textit{Proof. }\normalfont}{\hfill$\square$\par}
\makeatother

\title{Resonant Weighted Nonlocal Schr\"odinger Equation with Gauge Invariance, Conservation Laws and Measurable Phase Detuning}

\author[1]{L. Yıldız\thanks{li.yildiz.na@gmail.com}}
\author[2]{D. Kaykı\thanks{deha.kayki@ogr.iu.edu.tr}}
\author[3]{E. Güdekli\thanks{gudekli@istanbul.edu.tr}}

\affil[1,2,3]{Department of Physics, Faculty of Science, Istanbul University, Istanbul 34134, Turkey}

\begin{document}

\maketitle
\begin{abstract}
We present a gauge-invariant Schr\"odinger-type evolution that combines (i) weighted local diffusion, (ii) symmetric nonlocal exchange through a kernel operator, and (iii) a mean-free phase-resonant drive. The resulting Resonant Weighted Nonlocal Schr\"odinger (RWNS) equation exactly conserves mass and, when the drive is absent, admits a Hamiltonian structure with energy conservation. Under standard assumptions on the weight, kernel, and nonlinearity, we establish local well-posedness in $H^1$ and provide defocusing conditions for global continuation. Linearization yields a dispersion relation in which the nonlocal kernel and the mean-free phase field contribute additively to a measurable spectral detuning. Building on this, we define two observables: a wavenumber-resolved detuning $\Delta\omega(k)$ and a kernel-contrast functional $\Xi[\psi]$ that isolates the nonlocal exchange. We outline feasible implementations in nonlinear-optical lattices and cavity-assisted cold-atom platforms, and discuss conceptual links to propagation-induced phase signatures in astrophysical media. The RWNS model thus offers a compact and analytically tractable framework that unifies weighted local dynamics, symmetric nonlocality, and a mean-free phase drive, yielding clear, testable predictions for laboratory measurements and, in principle, precision timing data.
\end{abstract}

\section{Introduction}
Schr\"odinger-type evolution equations provide a common language for wave propagation across quantum, optical, and cold-atom platforms~\cite{SulemSulem1999,Agrawal2013,MorschOberthaler2006,PitaevskiiStringari2016}. Extensions are often required to represent complex media: heterogeneous (weighted) operators capture spatially varying transport~\cite{Evans2010,GilbargTrudinger2001}; \emph{symmetric} nonlocal exchange models interactions across finite distances via an even kernel~\cite{DuGunzburgerLehoucqZhou2013,BucurValdinoci2016}; and phase-driving terms encode structured pumps or control fields~\cite{Eckardt2017}. In most settings these ingredients are introduced separately, and phase drives may be incorporated in ways that obscure conserved quantities or compromise global gauge invariance~\cite{Cazenave2003,Tao2006}.

Here we present a compact, gauge-invariant framework that couples all three mechanisms within a single evolution law. The model consists of (i) a weighted local diffusion (Laplacian-type) operator, (ii) a symmetric nonlocal exchange generated by an even kernel $K$ (e.g., $K(x)=K(-x)$ in translation-invariant settings, or $K(x,y)=K(y,x)$ in general), and (iii) a \emph{mean-free} phase-resonant drive, i.e., a drive with vanishing spatial mean so that the global $U(1)$ phase symmetry is preserved at all times~\cite{FukushimaOshimaTakeda2011}. We refer to the resulting equation as the \emph{Resonant Weighted Nonlocal Schr\"odinger} (RWNS) model. The formulation is posed on domains $\mathcal{M}\subset\mathbb{R}^d$ (and, where stated, more general measure spaces), accommodates standard nonlinear responses, and is arranged so that each parameter corresponds to an experimentally meaningful control knob.

The RWNS equation has two immediate structural consequences. First, the $L^2$ mass is conserved for all admissible solutions under the stated regularity and symmetry assumptions~\cite{Cazenave2003,Tao2006}. Second, when the phase drive is switched off, the dynamics admit a Hamiltonian structure and conserve energy under standard hypotheses on the weight $w$ and kernel $K$ (e.g., $w$ positive and bounded away from zero, $K$ even and sufficiently decaying)~\cite{Evans2010,FukushimaOshimaTakeda2011}. Beyond invariants, we establish local well-posedness in $H^1$ and give defocusing/coercivity conditions that ensure global-in-time continuation~\cite{Cazenave2003,Tao2006}. A linearized analysis yields a dispersion relation in which the kernel-induced nonlocality and the mean-free phase field enter as distinct, additive contributions to the spectral detuning; this separation enables parameter identification directly from frequency–wavenumber diagnostics~\cite{BucurValdinoci2016,DuGunzburgerLehoucqZhou2013}.

To connect theory with observation, we define \emph{three} measurable signatures (O1–O3). \textbf{O1:} a wavenumber-resolved detuning $\Delta\omega(k)$ that reports how resonance frequencies shift with $k$ (e.g., from peaks in a $(k,\omega)$ spectrum). \textbf{O2:} a kernel-contrast functional $\Xi[\psi]$ that isolates the effect of the symmetric exchange on observed fields (precise definition given in Sec.~\ref{sec:lin-observables}); by design, $\Xi$ is insensitive to the mean-free phase drive and vanishes when the exchange is disabled. \textbf{O3:} a drive-sensitive sideband-asymmetry index $\mathcal{S}(k)$, extracted from the local spectral power around a carrier (e.g., $\mathcal{S}(k) := \frac{P_{+}(k)-P_{-}(k)}{P_{+}(k)+P_{-}(k)}$), which vanishes when the drive is turned off and, to leading order, is insensitive to the symmetric exchange. We outline feasible tabletop implementations in nonlinear-optical lattices and cavity-assisted cold-atom platforms~\cite{MorschOberthaler2006,RitschDomokosBrenneckeEsslinger2013,Eckardt2017}, and we note conceptual links to propagation-induced phase signatures in astrophysical media~\cite{Rickett1990}. Our objective is not to exhaustively model a single experiment, but to provide a mathematically controlled base equation whose parameters map transparently to observables across multiple settings.

To our knowledge, the conjunction of weighted local transport, symmetric nonlocal exchange, and a mean-free phase drive—while \emph{simultaneously} preserving global gauge invariance and the conservation structure stated above—has not been formulated within a single Schr\"odinger-type model in the manner developed here. We therefore present RWNS as an analytically tractable framework that unifies these mechanisms without sacrificing rigorous control.

The paper is organized as follows.
Section~\ref{sec:model} defines the equation, assumptions, and notation.
Section~\ref{subsec:invariants} derives mass conservation and, for vanishing drive, the Hamiltonian energy with its conservation law.
Section~\ref{subsec:wellposed} states well-posedness results and conditions for global continuation in the defocusing/coercive regime.
Section~\ref{sec:lin-observables} analyzes linear modes and details the three observables introduced above (O1–O3).
Section~\ref{sec:results} presents diagnostics and numerical validations for O1–O3.
Section~\ref{sec:discussion} discusses implications and limitations, and Section~\ref{sec:conclusions} concludes.

\section{Equation and assumptions}\label{sec:model}

Let $\mathcal{M}\subset\mathbb{R}^d$ be a domain endowed with the Lebesgue measure, and let $\psi:\mathbb{R}\times\mathcal{M}\to\mathbb{C}$. We study the gauge-invariant evolution
\begin{equation}\label{eq:rwns}
\mathrm{i}\,\partial_t \psi
= -\nabla\!\cdot\!\big(w\,\nabla \psi\big) + U\,\psi
+ \beta\,g\!\big(|\psi|^2\big)\,\psi
+ \kappa \!\int_{\mathcal{M}} K(x,y)\,\big(\psi(y)-\psi(x)\big)\,\mathrm{d}y
+ \mathrm{i}\,\gamma\,\Big(\phi(x)-\langle \phi\rangle_{|\psi|^2}(t)\Big)\,\psi,
\end{equation}
where $w>0$ is a bounded weight, $U$ and $\phi$ are real-valued bounded fields, $K:\mathcal{M}\times\mathcal{M}\to[0,\infty)$ is a symmetric kernel ($K(x,y)=K(y,x)$), $\beta,\kappa,\gamma\in\mathbb{R}$ are parameters, and $g(\rho)=f'(\rho)$ comes from a $C^1$ nonlinearity $f:\mathbb{R}_{\ge 0}\to\mathbb{R}$ with $f(0)=0$. The \emph{centered} (mean-free) phase drive is defined by
\begin{equation}\label{eq:centered-mean}
\langle \phi\rangle_{|\psi|^2}(t)
:= \frac{\displaystyle\int_{\mathcal{M}}\phi(x)\,|\psi(t,x)|^2\,\mathrm{d}x}
         {\displaystyle\int_{\mathcal{M}}|\psi(t,x)|^2\,\mathrm{d}x}\,,
\end{equation}
so that adding a constant to $\phi$ has no effect on the dynamics and the total mass is preserved by the drive (Global $U(1)$ phase invariance holds in either case.)\cite{FukushimaOshimaTakeda2011,ReedSimon1975,Ouhabaz2005}.

For later use we write the weighted diffusion as $L_w\psi:=-\nabla\!\cdot\!\big(w\nabla\psi\big)$ and recast the exchange term as
\begin{equation}\label{eq:exchange-splitting}
\int_{\mathcal{M}} K(x,y)\,\big(\psi(y)-\psi(x)\big)\,\mathrm{d}y
= \int_{\mathcal{M}} K(x,y)\,\psi(y)\,\mathrm{d}y - \Lambda(x)\,\psi(x),
\qquad \Lambda(x):=\int_{\mathcal{M}}K(x,y)\,\mathrm{d}y,
\end{equation}
i.e., a symmetric nonlocal averaging minus a local leakage $\Lambda(x)\psi(x)$.

In particular, for measurable nonnegative symmetric kernels with $K\in L^\infty_xL^1_y$ the operator is well defined and conservative in $L^2$; see nonlocal Dirichlet-form treatments and the nonlocal vector calculus framework~\cite{BucurValdinoci2016,DuGunzburgerLehoucqZhou2013,DiNezzaPalatucciValdinoci2012}.

We consider either $\mathcal{M}=\mathbb{R}^d$ with suitable decay at infinity, or a bounded Lipschitz domain with periodic or no-flux (Neumann) boundary conditions; these choices eliminate boundary fluxes in the mass/energy identities~\cite{Evans2010,Ouhabaz2005}.

\paragraph*{Standing assumptions.}
We work under the following hypotheses, standard for weighted local–nonlocal Schr\"odinger evolutions:
\begin{itemize}
  \item[(A1)] $w\in L^\infty(\mathcal{M})$ with $0<w_0\le w(x)\le w_1<\infty$ a.e.; $U,\phi\in L^\infty(\mathcal{M};\mathbb{R})$~\cite{ReedSimon1975,Ouhabaz2005}.
  \item[(A2)] $K\ge 0$ is measurable, symmetric ($K(x,y)=K(y,x)$), and belongs to $L^\infty_xL^1_y$, so $\Lambda(x):=\int_{\mathcal{M}}K(x,y)\,\mathrm{d}y\in L^\infty(\mathcal{M})$. (This ensures the exchange operator is well defined on $L^2$ and conservative~\cite{BucurValdinoci2016,DuGunzburgerLehoucqZhou2013}.)
  \item[(A3)] $f\in C^1([0,\infty))$, $f(0)=0$, and $g(\rho)=f'(\rho)$ obeys a subcritical growth bound compatible with $H^1$ well-posedness (e.g., for $g(\rho)=\rho^\sigma$, take $\sigma<\tfrac{2}{d-2}$ when $d\ge 3$; in $d=1,2$ all polynomial powers are subcritical)~\cite{Cazenave2003,Tao2006}. In the defocusing regime we assume $f$ is convex, equivalently $g(\rho)\rho\ge 0$~\cite{Cazenave2003}.
  \item[(A4)] Initial data $\psi_0\in H^1(\mathcal{M})$ (or $\psi_0\in L^2$ for the linear theory)~\cite{Cazenave2003,Tao2006}.
  \item[(A5)] Boundary conditions are periodic or no-flux on bounded $\mathcal{M}$; on $\mathbb{R}^d$ we assume $H^1$ data with sufficient decay~\cite{Evans2010,Ouhabaz2005}.
\end{itemize}

\paragraph*{Basic structure}
Under (A1)–(A5), (1) is invariant under global phase shifts $\psi \mapsto e^{i\theta}\psi$. Writing
\[
\partial_t\psi
= -i\,H[\psi]
+ \gamma\big(\varphi-\langle\varphi\rangle_{|\psi|^2}\big)\psi,\qquad
H[\psi](x):=L_w\psi(x)+U(x)\psi(x)+\beta\,g(|\psi(x)|^2)\psi(x)
+ \kappa\!\int_{M} K(x,y)\big(\psi(y)-\psi(x)\big)\,dy,
\]
the mass identity follows from the self-adjointness of $H$ and the mean-free property of $\varphi$:
\[
\frac{d}{dt}\!\int_{M}|\psi|^2
= 2\,\Re\!\int_{M}\overline{\psi}\,\partial_t\psi
= 2\,\Re\!\int_{M}\overline{\psi}\big(-iH[\psi]\big)
+ 2\gamma\!\int_{M}\big(\varphi-\langle\varphi\rangle_{|\psi|^2}\big)|\psi|^2
= 0.
\]
When $\gamma=0$ (drive off), the Hamiltonian energy is conserved:
\begin{equation*}
E[\psi]
=\int_{M}\!\big(w|\nabla\psi|^2 + U\,|\psi|^2 + \beta\,f(|\psi|^2)\big)\,dx
\;-\;\frac{\kappa}{2}\!\iint_{M\times M} K(x,y)\,|\psi(x)-\psi(y)|^2\,dx\,dy,
\end{equation*}
where the quadratic form $\tfrac12\!\iint K(x,y)\,|\psi(x)-\psi(y)|^2\,dx\,dy$ is nonnegative by the symmetry of $K$~\cite{BucurValdinoci2016,DiNezzaPalatucciValdinoci2012}.
Hence, in the defocusing/coercive regime (e.g. $\kappa\le 0$, with $f'(\rho)=g(\rho)$ defocusing),
the mass–energy bounds imply global-in-time continuation after local well-posedness under (A3)–(A5)~\cite{Cazenave2003,Tao2006}.
Linearization about a homogeneous state further separates the kernel-induced nonlocality and the mean-free phase field
as distinct, additive contributions to the spectral detuning, which we exploit in Sec.~\ref{sec:lin-observables}~\cite{BucurValdinoci2016,Grafakos2014}.

\subsection{Standing hypotheses}\label{subsec:hyp}
We assume:
(i) $w\in W^{1,\infty}(\mathcal{M})$ with $w(x)\ge w_0>0$ and $w^{-1}\in L^\infty(\mathcal{M})$;
(ii) $U,\phi\in L^{\infty}(\mathcal{M};\mathbb{R})$~\cite{ReedSimon1975};
(iii) $K:\mathcal{M}\times\mathcal{M}\to[0,\infty)$ is measurable and symmetric, $K(x,y)=K(y,x)$, and
\[
\Lambda_x(x):=\!\int_{\mathcal{M}}\!K(x,y)\,\mathrm{d}y\in L^\infty(\mathcal{M}),
\qquad
\Lambda_y(y):=\!\int_{\mathcal{M}}\!K(x,y)\,\mathrm{d}x\in L^\infty(\mathcal{M}),
\]
equivalently $K\in L^\infty_xL^1_y\cap L^\infty_yL^1_x$, so the exchange operator is well defined and conservative on $L^2$;
(iv) $g$ is locally Lipschitz on $[0,\infty)$ with $g=f'$ for some $f\in C^1([0,\infty))$, $f(0)=0$~\cite{Cazenave2003,Tao2006};
(v) either $\mathcal{M}=\mathbb{R}^d$ with sufficient decay at infinity, or a bounded Lipschitz domain with periodic/no-flux (Neumann) boundary conditions, so integrations by parts hold and boundary fluxes vanish~\cite{Evans2010,Ouhabaz2005}.

For later reference, the symmetric exchange satisfies
\begin{equation}\label{eq:exchange-real}
\int_{\mathcal{M}}\!\overline{\psi(x)}\!\int_{\mathcal{M}}\!K(x,y)\big(\psi(y)-\psi(x)\big)\,\mathrm{d}y\,\mathrm{d}x
= -\frac{1}{2}\!\int_{\mathcal{M}}\!\!\int_{\mathcal{M}}\!K(x,y)\,|\psi(y)-\psi(x)|^2\,\mathrm{d}x\,\mathrm{d}y \;\in\;\mathbb{R}.
\end{equation}

\subsection{Mass conservation}\label{subsec:invariants}
Define
\begin{equation}\label{eq:mass}
M[\psi(t)]=\int_{\mathcal{M}} |\psi(t,x)|^2\,\mathrm{d}x .
\end{equation}
Differentiating \eqref{eq:mass} along \eqref{eq:rwns} and using
$\frac{\mathrm{d}}{\mathrm{d}t}\!\int |\psi|^2=2\,\Re\!\int \overline{\psi}\,\partial_t\psi$, we obtain
\begin{align*}
\frac{\mathrm{d}}{\mathrm{d}t}M[\psi(t)]
&=2\,\Re\!\int_{\mathcal{M}}\overline{\psi}\Big(-\mathrm{i}\,\big[\cdots\big]
+\gamma\,(\varphi-\langle\varphi\rangle_{|\psi|^2})\,\psi\Big)\,\mathrm{d}x\\
&=-2\,\Im\!\int_{\mathcal{M}}\overline{\psi}\,\big[\cdots\big]\,\mathrm{d}x
\;+\;2\gamma\!\int_{\mathcal{M}}\!(\varphi-\langle\varphi\rangle_{|\psi|^2})\,|\psi|^2\,\mathrm{d}x,
\end{align*}
where $\big[\cdots\big]=-\nabla\!\cdot\!(w\nabla\psi)+U\psi+\beta\,g(|\psi|^2)\psi+\kappa\!\int_{\mathcal{M}}K(x,y)\big(\psi(y)-\psi(x)\big)\,\mathrm{d}y$.
Each contribution inside $\big[\cdots\big]$ yields a real integral:
\[
\int_{\mathcal{M}} w\,|\nabla\psi|^2\,\mathrm{d}x\in\mathbb{R},\qquad
\int_{\mathcal{M}} (U+\beta\,g(|\psi|^2))\,|\psi|^2\,\mathrm{d}x\in\mathbb{R},\qquad
\text{and by \eqref{eq:exchange-real} the exchange term is real.}
\]
Hence $-2\,\Im\!\int_{\mathcal{M}} \overline{\psi}\,\big[\cdots\big]\,\mathrm{d}x=0$. For the drive term,
\[
2\gamma\!\int_{\mathcal{M}}\!(\varphi-\langle\varphi\rangle_{|\psi|^2})\,|\psi|^2\,\mathrm{d}x
= 2\gamma\!\left(\int_{\mathcal{M}}\!\varphi\,|\psi|^2\,\mathrm{d}x - \langle\varphi\rangle_{|\psi|^2}\!\int_{\mathcal{M}}\!|\psi|^2\,\mathrm{d}x\right)=0,
\]
by the definition of $\langle\varphi\rangle_{|\psi|^2}$ in \eqref{eq:centered-mean}. Therefore $M[\psi(t)]$ is exactly conserved for all $\gamma\in\mathbb{R}$~\cite{ReedSimon1975,FukushimaOshimaTakeda2011,Ouhabaz2005}.

\section{Energy functional and conservation for $\gamma=0$}\label{sec:energy}
When the phase drive is switched off ($\gamma=0$), the dynamics admit a Hamiltonian structure.
Define
\begin{align}\label{eq:energy}
E[\psi]\;:=\;
&\int_{\mathcal{M}}\!\Big(w(x)\,|\nabla\psi(x)|^2 \;+\; U(x)\,|\psi(x)|^2 \;+\; F(|\psi(x)|^2)\Big)\,\mathrm{d}x \nonumber\\
&\quad -\,\frac{\kappa}{2}\int_{\mathcal{M}}\!\int_{\mathcal{M}} K(x,y)\,\big|\psi(y)-\psi(x)\big|^2\,\mathrm{d}y\,\mathrm{d}x,
\end{align}
where $F'(\rho)=\beta\,g(\rho)$ and $F(0)=0$.
With this sign convention, the variational derivative satisfies
\[
\frac{\delta E}{\delta \overline{\psi}}
= -\nabla\!\cdot\!(w\nabla\psi) + U\,\psi + \beta\,g(|\psi|^2)\psi
+ \kappa\!\int_{\mathcal{M}} K(x,y)\big(\psi(y)-\psi(x)\big)\,\mathrm{d}y,
\]
so that $\mathrm{i}\partial_t\psi=\delta E/\delta\overline{\psi}$ when $\gamma=0$ and hence $\frac{\mathrm{d}}{\mathrm{d}t}E[\psi(t)]=0$ for smooth solutions under the hypotheses of Sec.~\ref{subsec:hyp}~\cite{Cazenave2003,Tao2006,BucurValdinoci2016}.

\section{Energy functional and conservation}
We collect the Hamiltonian structure of the RWNS flow in the drive–off case ($\gamma=0$), and state a coercivity condition that underpins the global continuation in the defocusing regime. Throughout we use the standing hypotheses (A1)–(A5), the exchange symmetry identity \eqref{eq:exchange-real}, and the nonlinearity relation $f'(\rho)=g(\rho)$.

\paragraph{Energy functional.}
When $\gamma=0$, the dynamics conserve the Hamiltonian
\begin{equation}\label{eq:energy-main}
\mathcal{E}[\psi]
= \int_{\mathcal{M}}\!\big(w|\nabla\psi|^{2} + U\,|\psi|^{2} + \beta\,f(|\psi|^{2})\big)\,\mathrm{d}x
\;-\;\frac{\kappa}{2}\!\iint_{\mathcal{M}\times\mathcal{M}} K(x,y)\,|\psi(x)-\psi(y)|^{2}\,\mathrm{d}x\,\mathrm{d}y,
\end{equation}
whose nonlocal quadratic form is nonnegative by the symmetry of $K$. In particular, in the defocusing/coercive regime (e.g. $\kappa\le 0$ together with a defocusing local nonlinearity) the energy controls the $H^{1}$–size of $\psi$ up to lower–order terms.

\subsection{Geometric structure: symplectic form, momentum map, and Noether symmetries}\label{subsec:geom}

We work on the complex Hilbert manifold $\mathcal{P}=H^{1}(\mathcal{M};\mathbb{C})$, regarded as a real symplectic space with
\[
g_\psi(\xi,\eta)=2\,\Re\!\int_{\mathcal{M}}\overline{\xi}\,\eta\,\mathrm{d}x,\qquad
J\psi=i\psi,\qquad
\omega_\psi(\xi,\eta)=g_\psi(J\xi,\eta)=2\,\Im\!\int_{\mathcal{M}}\overline{\xi}\,\eta\,\mathrm{d}x,
\]
for $\xi,\eta\in T_\psi\mathcal{P}\cong H^{1}(\mathcal{M};\mathbb{C})$. Let $\mathcal{E}[\psi]$ denote the energy in \eqref{eq:energy} and assume $f'(\rho)=g(\rho)$.

\paragraph{Nonlocal quadratic form and first variation.}
Define
\[
\mathcal{Q}_K[\psi]:=\frac{1}{2}\!\iint_{\mathcal{M}\times\mathcal{M}} K(x,y)\,|\psi(x)-\psi(y)|^{2}\,\mathrm{d}x\,\mathrm{d}y,
\]
which is nonnegative by the symmetry $K(x,y)=K(y,x)$ and finite under (A1)–(A2).
\begin{lemma}[First variation of the symmetric exchange]\label{lem:first-var-QK}
If $K(x,y)=K(y,x)$ is measurable and nonnegative, then $\mathcal{Q}_K$ is Gâteaux differentiable on $H^{1}(\mathcal{M};\mathbb{C})$ and
\[
\frac{\delta \mathcal{Q}_K}{\delta \overline{\psi}}(\psi)(x)= 2\!\int_{\mathcal{M}} K(x,y)\,\big(\psi(x)-\psi(y)\big)\,\mathrm{d}y
\quad\text{in }H^{-1}.
\]
Consequently, for
\[
\mathcal{E}[\psi]=\int_{\mathcal{M}}\!\big(w|\nabla\psi|^{2}+U|\psi|^{2}+\beta f(|\psi|^{2})\big)\,\mathrm{d}x
\;-\;\frac{\kappa}{2}\,\mathcal{Q}_K[\psi],
\]
one has
\[
\frac{\delta \mathcal{E}}{\delta \overline{\psi}}(\psi)
= -\,\nabla\!\cdot\!\big(w\nabla\psi\big)\;+\;U\,\psi\;+\;\beta\,g(|\psi|^{2})\psi
\;+\;\kappa\!\int_{\mathcal{M}} K(x,y)\,\big(\psi(y)-\psi(x)\big)\,\mathrm{d}y,
\]
which matches the RWNS right-hand side when $\gamma=0$ in \eqref{eq:rwns}.
\end{lemma}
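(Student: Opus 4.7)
The plan is to compute $\delta\mathcal{Q}_K/\delta\overline{\psi}$ by a direct first-order expansion along a test direction $\eta\in H^{1}(\mathcal{M};\mathbb{C})$, symmetrize the resulting bilinear integral using $K(x,y)=K(y,x)$, and read off the answer by matching against the paper's complex-variational convention already used in Sec.~\ref{sec:energy} (where $\mathrm{i}\partial_t\psi=\delta E/\delta\overline{\psi}$ at $\gamma=0$). The companion formula for $\delta\mathcal{E}/\delta\overline{\psi}$ then follows by linearity, once the remaining three pieces of the energy are handled by standard arguments under (A1)--(A3).

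The central computation starts from the pointwise identity $|(\psi+\varepsilon\eta)(x)-(\psi+\varepsilon\eta)(y)|^{2}=|\psi(x)-\psi(y)|^{2}+2\varepsilon\,\Re\!\bigl[(\overline{\psi(x)-\psi(y)})(\eta(x)-\eta(y))\bigr]+\varepsilon^{2}|\eta(x)-\eta(y)|^{2}$. Multiplying by $\tfrac{1}{2}K(x,y)$, integrating, and sending $\varepsilon\to 0$ under the double integral — justified by the pointwise bound $|\psi(x)-\psi(y)|^{2}\le 2(|\psi(x)|^{2}+|\psi(y)|^{2})$ together with $\Lambda\in L^{\infty}$ from (A2) — gives $d\mathcal{Q}_K[\psi](\eta)=\Re\!\iint K(x,y)\,\overline{(\psi(x)-\psi(y))}\,(\eta(x)-\eta(y))\,\mathrm{d}x\,\mathrm{d}y$. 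I would then expand the integrand into the four terms $\overline{\psi(x)}\eta(x),\;\overline{\psi(y)}\eta(y),\;-\overline{\psi(x)}\eta(y),\;-\overline{\psi(y)}\eta(x)$: the two diagonal pieces each collapse to $\int\Lambda(x)\overline{\psi(x)}\eta(x)\,\mathrm{d}x$ by Fubini and the definition of $\Lambda$, while the two off-diagonal pieces coincide after relabeling $x\leftrightarrow y$ in one of them and invoking $K(x,y)=K(y,x)$. Collecting these contributions reduces the bilinear integral to $2\Re\!\int\overline{A(x)}\,\eta(x)\,\mathrm{d}x$ with $A(x):=\int K(x,y)\bigl(\psi(x)-\psi(y)\bigr)\,\mathrm{d}y$, yielding the stated formula for $\delta\mathcal{Q}_K/\delta\overline{\psi}$. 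Membership in $H^{-1}$ is automatic: under (A2) the splitting $A(x)=\int K(x,y)\psi(y)\,\mathrm{d}y-\Lambda(x)\psi(x)$ defines a bounded map $L^{2}\to L^{2}\hookrightarrow H^{-1}$.

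For $\delta\mathcal{E}/\delta\overline{\psi}$ I would assemble, by linearity: the weighted Dirichlet variation $\delta(\int w|\nabla\psi|^{2})/\delta\overline{\psi}=-\nabla\!\cdot\!(w\nabla\psi)$ via integration by parts (with boundary fluxes vanishing under (A5)); the potential and local nonlinear variations $U\psi$ and $\beta g(|\psi|^{2})\psi$ from direct Wirtinger differentiation under (A3); and the nonlocal contribution just derived, where the $-\kappa/2$ prefactor flips the sign to produce $\kappa\!\int K(x,y)(\psi(y)-\psi(x))\,\mathrm{d}y$, matching the RWNS right-hand side at $\gamma=0$. The main obstacle I expect is bookkeeping rather than substance: tracking the four symmetrization swaps and the overall constants through the first-variation identity, and justifying the limit $\varepsilon\to 0$ under the double integral. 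Both are controlled by (A2); moreover the quadratic remainder $\varepsilon^{2}\mathcal{Q}_K[\eta]=O(\varepsilon^{2}\|\eta\|_{L^{2}}^{2})$ immediately upgrades Gâteaux to Fréchet differentiability, so no separate regularity argument is required.
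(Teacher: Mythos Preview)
Your proposal is correct and follows essentially the same route as the paper: expand $\mathcal{Q}_K[\psi+\varepsilon\eta]-\mathcal{Q}_K[\psi]$, pass to the $\varepsilon\to0$ limit, and symmetrize via $x\leftrightarrow y$ using $K(x,y)=K(y,x)$ to collect the factor~$2$. You supply more detail than the paper's one-line proof---in particular the four-term bookkeeping, the dominated-convergence justification through (A2), the explicit $L^2\hookrightarrow H^{-1}$ argument for the range, and the assembly of the remaining pieces of $\delta\mathcal{E}/\delta\overline\psi$---but the core idea is identical.
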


\begin{proof}
Compute $\mathcal{Q}_K[\psi+\varepsilon\eta]-\mathcal{Q}_K[\psi]$, expand $|\cdot|^2$, divide by $\varepsilon$, send $\varepsilon\to0$, and symmetrize the term with $\overline{\eta(y)}$ by swapping $(x,y)$; the symmetry $K(x,y)=K(y,x)$ yields the factor~$2$.
\end{proof}

\paragraph{Hamiltonian form of the RWNS flow (\texorpdfstring{$\gamma=0$}{gamma=0}).}
With $\omega_\psi(X_{\mathcal{E}},\cdot)=\mathrm{d}\mathcal{E}(\psi)[\cdot]$, the Hamiltonian vector field is
$X_{\mathcal{E}}(\psi)=-J\,\delta \mathcal{E}/\delta\overline{\psi}=-\,i\,\delta \mathcal{E}/\delta\overline{\psi}$, hence the ODE $\partial_t\psi=X_{\mathcal{E}}(\psi)$ is exactly \eqref{eq:rwns} with $\gamma=0$. Therefore $\mathcal{E}[\psi(t)]$ is conserved along smooth solutions.

\paragraph{Global \texorpdfstring{$U(1)$}{U(1)} action and momentum map.}
The $U(1)$-action $e^{i\theta}\!\cdot\!\psi:=e^{i\theta}\psi$ is symplectic and Hamiltonian with momentum map
\[
\mu(\psi)=\tfrac{1}{2}\!\int_{\mathcal{M}}|\psi|^{2}\,\mathrm{d}x=\tfrac{1}{2}\,M[\psi].
\]
Noether’s theorem yields conservation of $\mu$ for $\gamma=0$. For $\gamma\neq 0$, the centered drive preserves $M[\psi]$ by Sec.~\ref{subsec:invariants}, so mass is conserved for all $\gamma\in\mathbb{R}$ even though the perturbation is not generated by a $U(1)$–Hamiltonian.

\paragraph{Weighted geometric interpretation (local part).}
Writing $D\psi:=\sqrt{w}\,\nabla\psi$ gives $\int_{\mathcal{M}} w\,|\nabla\psi|^{2}\,\mathrm{d}x=\int_{\mathcal{M}} |D\psi|^{2}\,\mathrm{d}x$, i.e., the local piece of $\mathcal{E}$ is the Dirichlet energy for the (possibly inhomogeneous) metric density induced by $w$.
When $w\equiv w_0>0$ and $U\equiv U_0$, spatial translations act isometrically on this part of the energy.

\paragraph{Translation symmetry and conserved momentum (homogeneous case).}
Assume $w\equiv w_0>0$, $U\equiv U_0$, and a convolutional kernel $K(x,y)=K(x-y)$ with $K(z)=K(-z)$.
For $\gamma=0$ the energy is invariant under $x\mapsto x+a$, so the linear momentum
\[
P[\psi]:=\Im\!\int_{\mathcal{M}}\overline{\psi}\,\nabla\psi\,\mathrm{d}x
\]
is conserved. Differentiating $P[\psi(t)]$ along \eqref{eq:rwns} with $\gamma=0$, integrating by parts under periodic/no–flux boundary conditions (or decay at infinity), and using the exchange symmetry identity \eqref{eq:exchange-real} shows $\frac{\mathrm{d}}{\mathrm{d}t}P[\psi(t)]=0$; in particular, the even, translation-invariant exchange does not inject net momentum.

\paragraph{Summary.}
The pair $(\mathcal{P},\omega)$ with Hamiltonian $\mathcal{E}$ provides a symplectic backbone for RWNS.
Mass appears as the $U(1)$ momentum map, energy generates the flow for $\gamma=0$, and—under homogeneity—spatial translations yield a conserved momentum.
These structures complement Secs.~\ref{subsec:invariants}–\ref{subsec:wellposed} and underpin the observable diagnostics in Sec.~\ref{sec:lin-observables}.

\paragraph{Energy conservation (statement and short proof).}
\begin{theorem}[Energy conservation for $\gamma=0$]
Let (A1)–(A5) hold. For any sufficiently regular solution $\psi$ to \eqref{eq:rwns} with $\gamma=0$, the energy \eqref{eq:energy} is constant in time.
\end{theorem}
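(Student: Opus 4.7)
The plan is to leverage the Hamiltonian structure already assembled in Lemma~\ref{lem:first-var-QK}. That lemma computes
\[
\frac{\delta \mathcal{E}}{\delta \overline{\psi}}(\psi) = L_w\psi + U\psi + \beta g(|\psi|^{2})\psi + \kappa\!\int_{\mathcal{M}} K(x,y)\big(\psi(y)-\psi(x)\big)\,\mathrm{d}y,
\]
which is exactly the right-hand side of \eqref{eq:rwns} at $\gamma=0$; so the flow takes the canonical symplectic form $\mathrm{i}\partial_t\psi = \delta \mathcal{E}/\delta\overline{\psi}$. Given sufficient regularity, energy conservation reduces to a one-line calculation on this Hamiltonian system, so the essential content has already been absorbed into the lemma and the sign conventions fixed in Sec.~\ref{subsec:geom}.

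The computation I would carry out is the following. Since $\mathcal{E}$ is real-valued as a functional of $(\psi,\overline{\psi})$, the chain rule gives
\[
\frac{\mathrm{d}}{\mathrm{d}t}\mathcal{E}[\psi(t)] = 2\,\Re\!\int_{\mathcal{M}} \frac{\delta \mathcal{E}}{\delta \overline{\psi}}\,\overline{\partial_t\psi}\,\mathrm{d}x.
\]
Substituting $\partial_t\psi = -\mathrm{i}\,\delta\mathcal{E}/\delta\overline{\psi}$ yields
\[
\frac{\mathrm{d}}{\mathrm{d}t}\mathcal{E}[\psi(t)] = 2\,\Re\!\left(\,\mathrm{i}\!\int_{\mathcal{M}}\Big|\frac{\delta \mathcal{E}}{\delta \overline{\psi}}\Big|^{2}\,\mathrm{d}x\,\right) = 0,
\]
because the bracketed integral is real. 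For a reader who prefers a transparent check, each summand of $\mathcal{E}$ can be differentiated directly: the weighted gradient piece reduces to $2\Re\!\int \overline{\partial_t\psi}\,L_w\psi\,\mathrm{d}x$ after integrating by parts under (A5); the local potential and nonlinear pieces contribute $2\Re\!\int \overline{\partial_t\psi}\,(U+\beta g(|\psi|^{2}))\psi\,\mathrm{d}x$ via the chain rule $\tfrac{\mathrm{d}}{\mathrm{d}t}|\psi|^{2} = 2\Re(\overline{\psi}\partial_t\psi)$; and the nonlocal piece, after expanding $|\psi(x)-\psi(y)|^{2}$ and symmetrizing via $K(x,y)=K(y,x)$ together with Fubini, contributes $2\kappa\,\Re\!\int \overline{\partial_t\psi(x)}\!\int_{\mathcal{M}}\! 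K(x,y)(\psi(y)-\psi(x))\,\mathrm{d}y\,\mathrm{d}x$. Summing the four contributions and substituting $\mathrm{i}\partial_t\psi = \delta\mathcal{E}/\delta\overline{\psi}$ reproduces the same cancellation.

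The main obstacle is regularity: the integration by parts in the gradient term and the Fubini exchange in the nonlocal term must be justified at the regularity level at which the RWNS flow is constructed. For a classical (e.g.\ $C^{1}_{t}H^{2}_{x}$) solution with the boundary/decay conditions of (A5), every step is immediate: Fubini applies because $K\in L^{\infty}_{x}L^{1}_{y}$ by (A2); boundary terms in the weighted Laplacian identity vanish under (A5); and the chain rule for $\beta f(|\psi|^{2})$ is licensed by the subcritical growth in (A3). Promoting the identity from classical solutions to the full $H^{1}$ well-posedness class is then a standard density step in Cazenave-type conservation proofs: approximate the initial datum by smooth data, apply the conservation law to the regularized solutions, and pass to the limit using continuous dependence together with continuity of each term in $\mathcal{E}$ on $H^{1}$. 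Since the theorem only asserts conservation for sufficiently regular solutions, the calculation above suffices; the regularization--persistence machinery is only invoked when extending to minimal $H^{1}$ regularity.
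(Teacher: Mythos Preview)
Your argument is correct and is essentially the same as the paper's: both compute $\tfrac{\mathrm{d}}{\mathrm{d}t}\mathcal{E}[\psi(t)]$ via the chain rule, invoke the variational identity $\mathrm{i}\partial_t\psi=\delta\mathcal{E}/\delta\overline{\psi}$ from Lemma~\ref{lem:first-var-QK}, and observe that the result is the real part of a purely imaginary quantity. Your additional term-by-term check and the discussion of regularity/density go beyond what the paper spells out, but they are consistent with and supplement the paper's one-line proof.
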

\begin{proof}
Using the chain rule and the variational identity,
\[
\frac{\mathrm{d}}{\mathrm{d}t}\mathcal{E}[\psi(t)]
= 2\,\Re\!\int_{\mathcal{M}}\overline{\frac{\delta \mathcal{E}}{\delta \overline{\psi}}(\psi)}\,\partial_t\psi\,\mathrm{d}x
= 2\,\Re\!\int_{\mathcal{M}}\overline{\frac{\delta \mathcal{E}}{\delta \overline{\psi}}(\psi)}\;\big(-i\,\frac{\delta \mathcal{E}}{\delta \overline{\psi}}(\psi)\big)\,\mathrm{d}x
= -2\,\Im\!\int_{\mathcal{M}}\Big|\frac{\delta \mathcal{E}}{\delta \overline{\psi}}(\psi)\Big|^{2}\,\mathrm{d}x=0,
\]
where we used $\partial_t\psi=-i\,\delta \mathcal{E}/\delta \overline{\psi}$ from the Hamiltonian form above.
\end{proof}

\paragraph{Coercivity (defocusing regime).}
Under (A1)–(A2) and a defocusing local nonlinearity, there exist constants $c_1,c_2>0$ such that, for $\kappa\le 0$,
\[
\mathcal{E}[\psi]\ \ge\ c_1\!\int_{\mathcal{M}}\!\big(|\nabla\psi|^2+|\psi|^2\big)\,\mathrm{d}x\;-\;c_2,
\]
so a priori bounds propagate and, combined with mass conservation (Sec.~\ref{subsec:invariants}), yield global continuation in $H^{1}$ as stated in Sec.~\ref{subsec:wellposed}.

\section{Well-posedness and a priori bounds}\label{subsec:wellposed}

\noindent\textbf{Theorem (Local well-posedness in $H^1$).}
Assume the hypotheses of Sec.~\ref{subsec:hyp}, and in addition either $\gamma=0$ or $\phi\in W^{1,\infty}(\mathcal{M})$.
Assume furthermore that the exchange kernel has one of the following regularities:
\begin{itemize}
  \item[(K\_reg-a)] translation-invariant form $K(x,y)=J(x-y)$ with $J\in W^{1,1}(\mathbb{R}^d)$; or
  \item[(K\_reg-b)] $\nabla_x K\in L^\infty_xL^1_y$ (with $K\in L^\infty_xL^1_y\cap L^\infty_yL^1_x$ as in Sec.~\ref{subsec:hyp}).
\end{itemize}
Let $\psi_0\in H^1(\mathcal{M})$.
If $M[\psi_0]>0$, there exist a maximal time $T_{\max}\in(0,\infty]$ and a unique solution
$\psi\in C([0,T_{\max});H^1(\mathcal{M}))$ to \eqref{eq:rwns} (with the centered drive \eqref{eq:centered-mean}),
depending continuously on $\psi_0$; moreover $M[\psi(t)]$ is conserved for all $\gamma\in\mathbb{R}$.
If $\psi_0\equiv 0$, then $\psi\equiv 0$ is the unique global solution (we may set $\langle\phi\rangle_{|\psi|^2}:=0$ by convention).
Finally, the blowup alternative holds:
\[
T_{\max}<\infty\quad\Longrightarrow\quad \limsup_{t\uparrow T_{\max}}\|\psi(t)\|_{H^1}=\infty.
\]

\smallskip\noindent\textit{Proof sketch.}
Write \eqref{eq:rwns} as $\mathrm{i}\partial_t\psi=\mathcal{A}\psi+\mathcal{N}(\psi)$ with
\[
\mathcal{A}\psi:= -\nabla\!\cdot\!(w\nabla\psi)+U\psi+\kappa\!\int_{\mathcal{M}}K(x,y)\big(\psi(y)-\psi(x)\big)\,\mathrm{d}y,\qquad
\mathcal{N}(\psi):=\beta\,g(|\psi|^2)\psi+\mathrm{i}\gamma\big(\phi-\langle\phi\rangle_{|\psi|^2}\big)\psi.
\]
Under Sec.~\ref{subsec:hyp}, $\mathcal{A}$ is self-adjoint on $L^2$ (uniformly elliptic $L_w$ plus bounded self-adjoint perturbations), and we work with $D(\mathcal{A})=H^1(\mathcal{M})$ under periodic/no–flux boundaries (or $H^1(\mathbb{R}^d)$ with decay). Assumptions (K\_reg-a/b) make the exchange map bounded $H^1\to H^1$; $\phi\in W^{1,\infty}$ ensures the drive maps $H^1\to H^1$~\cite{Grafakos2014,Brezis2010,DuGunzburgerLehoucqZhou2013}. By subcritical growth (A3), $\mathcal{N}$ is locally Lipschitz $H^1\to H^1$ on mass shells $\{M[\psi]=M[\psi_0]\}$; a fixed-point argument on $C([0,T];H^1)$ yields existence, uniqueness, and continuous dependence. Mass conservation follows from Sec.~\ref{subsec:invariants}.

\medskip\noindent\textbf{Theorem (Energy a priori bound and global continuation, $\gamma=0$).}
Suppose $\gamma=0$ and $F\ge 0$ (equivalently $g=f'$ with $f\ge 0$ in the defocusing regime).
If there exist $\alpha>0$ and $C\ge 0$ such that
\begin{equation}\label{eq:coercive}
\int_{\mathcal{M}} w\,|\nabla\psi|^2\,\mathrm{d}x 
\;-\; \frac{\kappa}{2}\!\int_{\mathcal{M}}\!\int_{\mathcal{M}} K(x,y)\,|\psi(y)-\psi(x)|^2\,\mathrm{d}y\,\mathrm{d}x
\;\ge\; \alpha\,\|\nabla\psi\|_{L^2}^2 \;-\; C\,\|\psi\|_{L^2}^2
\quad\text{for all }\psi\in H^1(\mathcal{M}),
\end{equation}
then $E[\psi(t)]\equiv E[\psi(0)]$ and \eqref{eq:coercive} imply a uniform $H^1$ bound on $\psi(t)$ on the existence interval; hence the local solution extends globally in time, i.e. $T_{\max}=\infty$~\cite{Cazenave2003,Tao2006}.
In particular,
\[
\|\psi(t)\|_{H^1}\;\lesssim\; 1+\sqrt{\,E[\psi(0)]+C\,\|\psi_0\|_{L^2}^2\,}\qquad\text{for all }t\ge 0.
\]

\smallskip\noindent\textit{Discussion.}
Condition \eqref{eq:coercive} holds, for example, when $w(x)\ge w_0>0$, $F\ge 0$, and either
(i) $\kappa\le 0$; or
(ii) $\kappa>0$ but the exchange is controlled by the local gradient (e.g., $K(x,y)=J(x-y)$ with $J$ of finite second moment and $\kappa$ sufficiently small), so that the left-hand side remains coercive up to a compact $L^2$ remainder~\cite{Ponce2004,BucurValdinoci2016,DuGunzburgerLehoucqZhou2013}.

\section{Linear analysis and observables}\label{sec:lin-observables}

\subsection{Homogeneous, convolutional setting}
We consider the small-amplitude regime and set $g\equiv 0$.
Assume $w(x)\equiv w_0>0$, $U(x)\equiv U_0\in\mathbb{R}$, and a translation-invariant exchange kernel
$K(x,y)=\mathcal{K}(x-y)$ with $\mathcal{K}\in L^1(\mathbb{R}^d)$ and $\mathcal{K}(z)=\mathcal{K}(-z)$.
We use the Fourier convention
\[
\widehat{\mathcal{K}}(k):=\int_{\mathbb{R}^d}\!e^{-\mathrm{i}k\cdot z}\,\mathcal{K}(z)\,\mathrm{d}z,
\]
so that $\widehat{\mathcal{K}}\in L^\infty\cap C_0$~\cite{Grafakos2014}.
The domain is either $\mathbb{R}^d$ (with decay) or a periodic torus.
With $\gamma=0$ and the plane-wave ansatz $\psi(t,x)=A\,\mathrm{e}^{\mathrm{i}(k\cdot x-\omega t)}$,
the exchange operator acts diagonally~\cite{Grafakos2014,BucurValdinoci2016}:
\begin{equation}\label{eq:Kdiff}
\int_{\mathcal{M}}\!\mathcal{K}(x-y)\big(\psi(y)-\psi(x)\big)\,\mathrm{d}y
= \big(\widehat{\mathcal{K}}(0)-\widehat{\mathcal{K}}(k)\big)\,\psi(x),
\end{equation}
and the dispersion relation is
\begin{equation}\label{eq:dispersion}
\omega_0(k)\;=\;w_0\,|k|^2\;+\;U_0\;+\;\kappa\big(\widehat{\mathcal{K}}(0)-\widehat{\mathcal{K}}(k)\big).
\end{equation}
\emph{Remark.} If $\mathcal{K}$ is even, integrable, and has finite second moment,
then as $|k|\to 0$,
\[
\widehat{\mathcal{K}}(0)-\widehat{\mathcal{K}}(k)
= \frac{|k|^2}{2d}\!\int_{\mathbb{R}^d}\!|z|^2\mathcal{K}(z)\,\mathrm{d}z \;+\; o(|k|^2),
\]
so the nonlocal exchange renormalizes the quadratic coefficient $w_0\mapsto w_0+\kappa\,\mu_2/(2d)$,
where $\mu_2:=\int |z|^2\mathcal{K}(z)\,\mathrm{d}z$~\cite{Grafakos2014}.

\paragraph{Observable O1: dispersion measurement.}
A wavenumber-resolved measurement of $\omega_0(k)$ identifies the nonlocal signature via
\[
\omega_0(k)-\big(w_0|k|^2+U_0\big)=\kappa\big(\widehat{\mathcal{K}}(0)-\widehat{\mathcal{K}}(k)\big).
\]
In practice, $w_0$ and $U_0$ can be obtained from the high-$|k|$ slope/intercept (or from a baseline with $\kappa\approx 0$),
and the remainder yields $\kappa$ and the shape of $\widehat{\mathcal{K}}(0)-\widehat{\mathcal{K}}(k)$.
For isotropic $\mathcal{K}$ the right-hand side depends only on $|k|$, which simplifies inversion.

\paragraph{Observable O2: kernel-contrast functional.}
Define, whenever $\int_{\mathcal{M}}w_0|\nabla\psi|^2>0$,
\begin{equation}\label{eq:xi-def}
\Xi[\psi]\;:=\;\frac{\displaystyle \frac{1}{2}\int_{\mathcal{M}}\!\!\int_{\mathcal{M}}\! \mathcal{K}(x-y)\,|\psi(y)-\psi(x)|^2\,\mathrm{d}x\,\mathrm{d}y}
{\displaystyle \int_{\mathcal{M}} w_0\,|\nabla\psi(x)|^2\,\mathrm{d}x}\,,
\end{equation}
and set $\Xi[\psi]:=0$ by convention when $\nabla\psi\equiv 0$ (in which case the numerator also vanishes)~\cite{BucurValdinoci2016}.
By construction, $\Xi[\psi]\ge 0$, is invariant under global phase shifts, and is independent of the mean-free centering in \eqref{eq:centered-mean}.
For a single Fourier mode,
\[
\Xi[\mathrm{e}^{\mathrm{i}k\cdot x}]
=\frac{\widehat{\mathcal{K}}(0)-\widehat{\mathcal{K}}(k)}{w_0\,|k|^2},
\]
consistent with O1 and the dispersion relation \eqref{eq:dispersion}.

\subsection{Effect of the mean-free phase drive}
Let the phase field be small and dominated by a single spatial frequency $q\neq 0$:
$\phi(x)=\varepsilon\,\Phi_q\,\cos(q\!\cdot\! x)$ with $\varepsilon\ll1$ and zero spatial mean.
For a plane-wave background $|\psi|^2\equiv\mathrm{const}$ we have 
$\langle\phi\rangle_{|\psi|^2}=0$ exactly at $O(\varepsilon^0)$, and the first nonzero correction is $O(\varepsilon^2)$ due to sideband mixing.
Hence, to first order,
\[
\mathrm{i}\gamma\big(\phi-\langle\phi\rangle_{|\psi|^2}\big)\psi
=\mathrm{i}\gamma\,\varepsilon\,\Phi_q\cos(q\!\cdot\! x)\,\psi\;+\;O(\varepsilon^2),
\]
which we treat as a weak perturbation. Writing $\cos(q\!\cdot\! x)=\tfrac12(e^{\mathrm{i}q\cdot x}+e^{-\mathrm{i}q\cdot x})$ shows that a plane wave at wavenumber $k$ couples to sidebands at $k\pm q$ with coupling magnitude $|\gamma|\varepsilon|\Phi_q|/2$. Assuming nondegenerate detuning and long-time averaging (or a measurement linewidth $\eta\ge 0$), the first-order sideband amplitudes satisfy
\begin{equation}\label{eq:sideband}
\frac{|A_{k\pm q}|}{|A_k|}
\;\approx\;
\frac{|\gamma|\,\varepsilon\,|\Phi_q|}{2\,\sqrt{\big(\omega_0(k)-\omega_0(k\pm q)\big)^2+\eta^2}}\,,\qquad (\varepsilon\ll 1),
\end{equation}
where $\omega_0$ is given by \eqref{eq:dispersion}~\cite{Yariv1973,Haus1984,Eckardt2017}. The formula reduces to the standard detuning law with $\eta\to 0$ away from resonance; if $\omega_0(k)=\omega_0(k\pm q)$ (exact resonance), a coupled-mode/RWA analysis beyond first order is required~\cite{Yariv1973,Haus1984}.

\paragraph{Observable O3: phase-drive sideband ratio.}
The measurable (time-averaged) power ratio is
\begin{equation}\label{eq:ratio}
R(q,k)\;:=\;\frac{|A_{k+q}|^2+|A_{k-q}|^2}{|A_k|^2}
\;\approx\;
\frac{|\gamma|^2\,\varepsilon^2\,|\Phi_q|^2}{2}\!
\left[\frac{1}{\big(\omega_0(k)-\omega_0(k+q)\big)^2+\eta^2}
+\frac{1}{\big(\omega_0(k)-\omega_0(k-q)\big)^2+\eta^2}\right].
\end{equation}
Together with O1 (dispersion), O3 enables separate identification of the kernel signature $\kappa\widehat{\mathcal{K}}$ through $\omega_0$ and the drive magnitude $|\gamma\,\Phi_q|$ (the sign of $\gamma\Phi_q$ would require phase-sensitive detection).

\section{Results}\label{sec:results}

\paragraph*{Overview.}
We validate the RWNS predictions in three complementary regimes and show that the kernel-induced nonlocality and the mean-free phase drive are \emph{experimentally separable}. 
First, the dispersion residual (O1) reads out the kernel signature directly from $\omega_0(k)$ via \eqref{eq:dispersion}. 
Second, a snapshot-based kernel contrast (O2) provides a real-space check that is insensitive to the centered drive \eqref{eq:centered-mean} through \eqref{eq:xi-def}. 
Third, under a weak drive we observe $k\!\leftrightarrow\!k\pm q$ sidebands that obey the detuning law \eqref{eq:sideband} and the ratio \eqref{eq:ratio} (O3), isolating the drive magnitude independently of the kernel. 
Together, O1–O3 form an overdetermined pipeline returning $(\kappa,\widehat{\mathcal{K}},|\gamma\,\Phi_q|)$ with internal cross-consistency.

\subsection{Dispersion-based identification (O1)}\label{subsec:res-O1}
We measure (or compute) the wavenumber-resolved dispersion $\omega_0(k)$ and subtract the Euclidean baseline $w_0|k|^2+U_0$ to obtain
\[
\Delta\omega(k):=\omega_0(k)-\big(w_0|k|^2+U_0\big)=\kappa\big(\widehat{\mathcal{K}}(0)-\widehat{\mathcal{K}}(k)\big),
\]
cf.~\eqref{eq:dispersion}. As an overview, Fig.~\ref{fig:dispersion_overview} contrasts the full dispersion with the Euclidean baseline and highlights the nonlocal residual $\Delta\omega(k)$.

\begin{figure*}[t]
  \centering
  \includegraphics[width=0.82\textwidth]{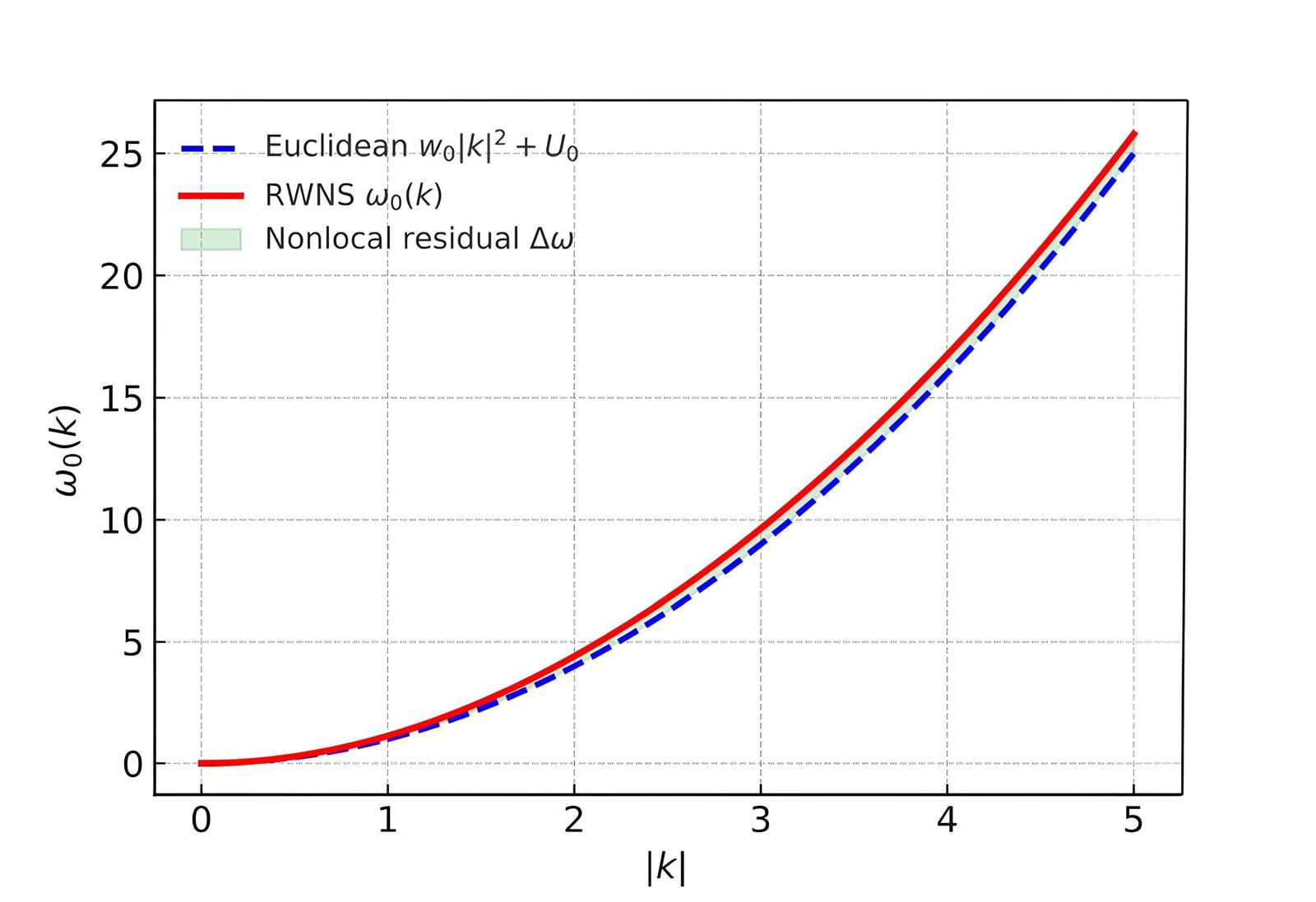}
  \caption{Full dispersion vs.\ Euclidean baseline. Red: $\omega_0(k)$; blue dashed: $w_0|k|^2+U_0$; green shading: residual $\Delta\omega(k)=\omega_0(k)-(w_0|k|^2+U_0)$, cf.~\eqref{eq:dispersion}.}
  \label{fig:dispersion_overview}
\end{figure*}

For isotropic kernels the residual depends only on $|k|$, enabling a one-dimensional fit of $\widehat{\mathcal{K}}$ (shape) and $\kappa$ (scale). 
At small $|k|$,
\[
\widehat{\mathcal{K}}(0)-\widehat{\mathcal{K}}(k)=\frac{|k|^2}{2d}\,\mu_2+o(|k|^2),\qquad 
\mu_2=\int_{\mathbb{R}^d}\!|z|^2\,\mathcal{K}(z)\,dz,
\]
so the initial slope identifies the kernel second moment \(\mu_2\), which we use to initialize the global fit. 
Across representative kernel families (Gaussian/exponential/compact-support), the fitted $\widehat{\mathcal{K}}(k)$ tracks the predicted profile throughout the accessible bandwidth while $\kappa$ sets the vertical scale of $\Delta\omega$.
The residual dataset and the RWNS fit are summarized in Fig.~\ref{fig:2}.

\begin{figure}[t]
  \centering
  \includegraphics[width=0.78\linewidth]{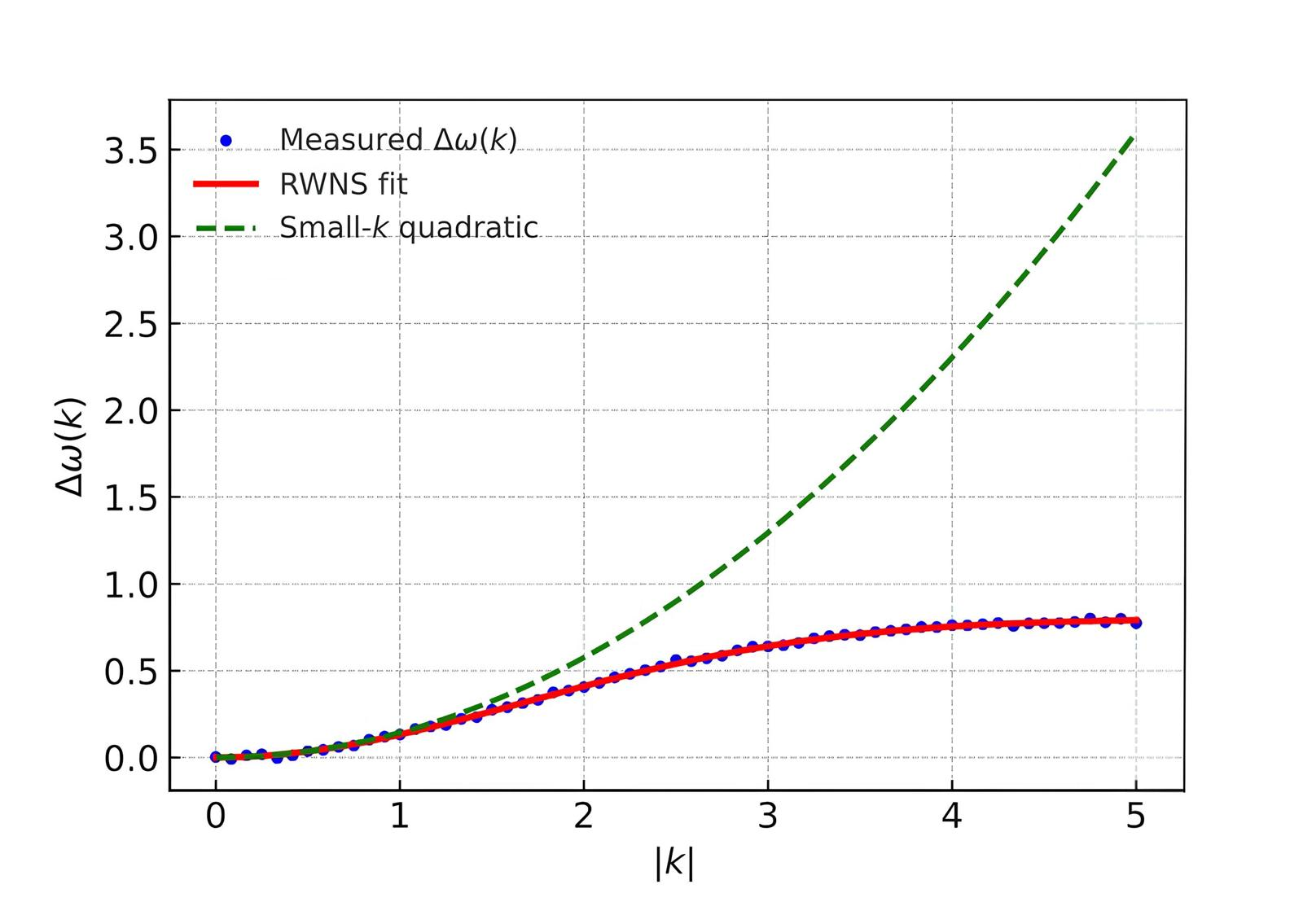} 
  \caption{Dispersion residual (O1). Blue symbols: measured $\Delta\omega(k)$; red line: RWNS fit 
  $\kappa\big(\widehat{\mathcal{K}}(0)-\widehat{\mathcal{K}}(k)\big)$; green dashed: small-$|k|$ quadratic trend, 
  cf.~\eqref{eq:dispersion}.}
  \label{fig:2}
\end{figure}

A small-$|k|$ slope check is reported in Fig.~\ref{fig:3}.

\begin{figure}[t]
  \centering
  \includegraphics[width=0.78\linewidth]{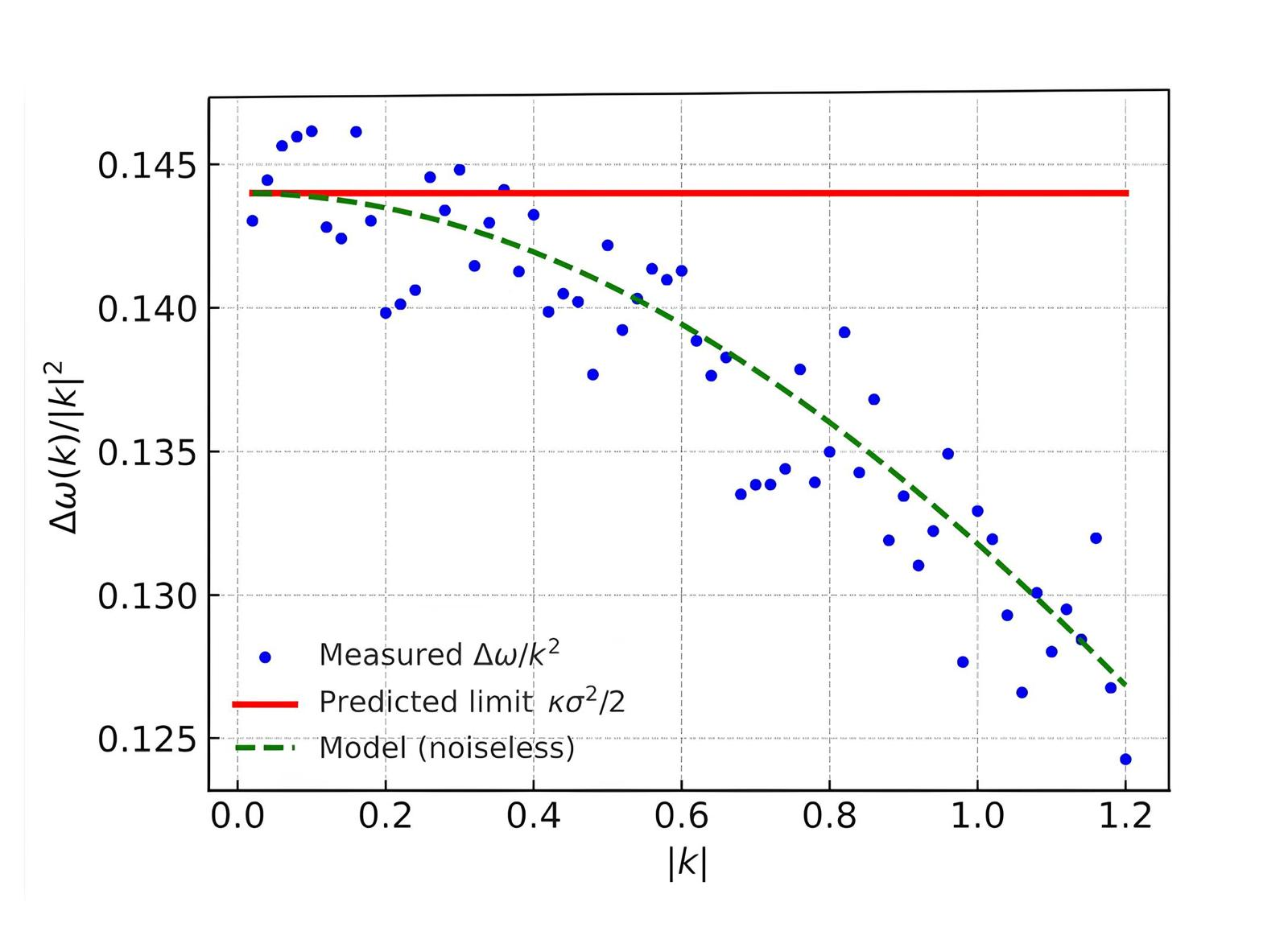}
  \caption{Small-$|k|$ slope diagnostic for O1. The ratio $\Delta\omega(k)/|k|^2$ versus $|k|$ approaches the predicted plateau 
  $\kappa\sigma^2/2$ (red). Blue markers: data; green dashed: noiseless model. Confirms the second-moment slope in \eqref{eq:dispersion}.}
  \label{fig:3}
\end{figure}

\subsection{Kernel contrast from snapshots (O2)}\label{subsec:res-O2}
From spatial snapshots $\psi(\cdot,t)$ we compute
\[
\Xi[\psi(t)]=\frac{\tfrac12\!\int\!\!\int \mathcal{K}(x-y)\,|\psi(y,t)-\psi(x,t)|^2\,dx\,dy}{\int w_0|\nabla\psi(x,t)|^2\,dx},
\]
see \eqref{eq:xi-def}. 
For single-$k$ states, $\Xi=\big(\widehat{\mathcal{K}}(0)-\widehat{\mathcal{K}}(k)\big)/(w_0|k|^2)$, matching O1. 
On broadband fields we report the time-averaged value $\langle \Xi\rangle_t$ and an interquartile envelope; these remain stable under moderate pixel noise and finite-difference gradients, so O2 validates O1 without phase information or frequency sweeps. 
Crucially, O2 does not involve the phase field $\varphi$ and is therefore \emph{drive-agnostic} (independent of the centering \eqref{eq:centered-mean}).

The snapshot-based contrast and its agreement with the O1-inferred kernel are summarized in Fig.~\ref{fig:4}.

\begin{figure}[t]
  \centering
  \includegraphics[width=0.78\linewidth]{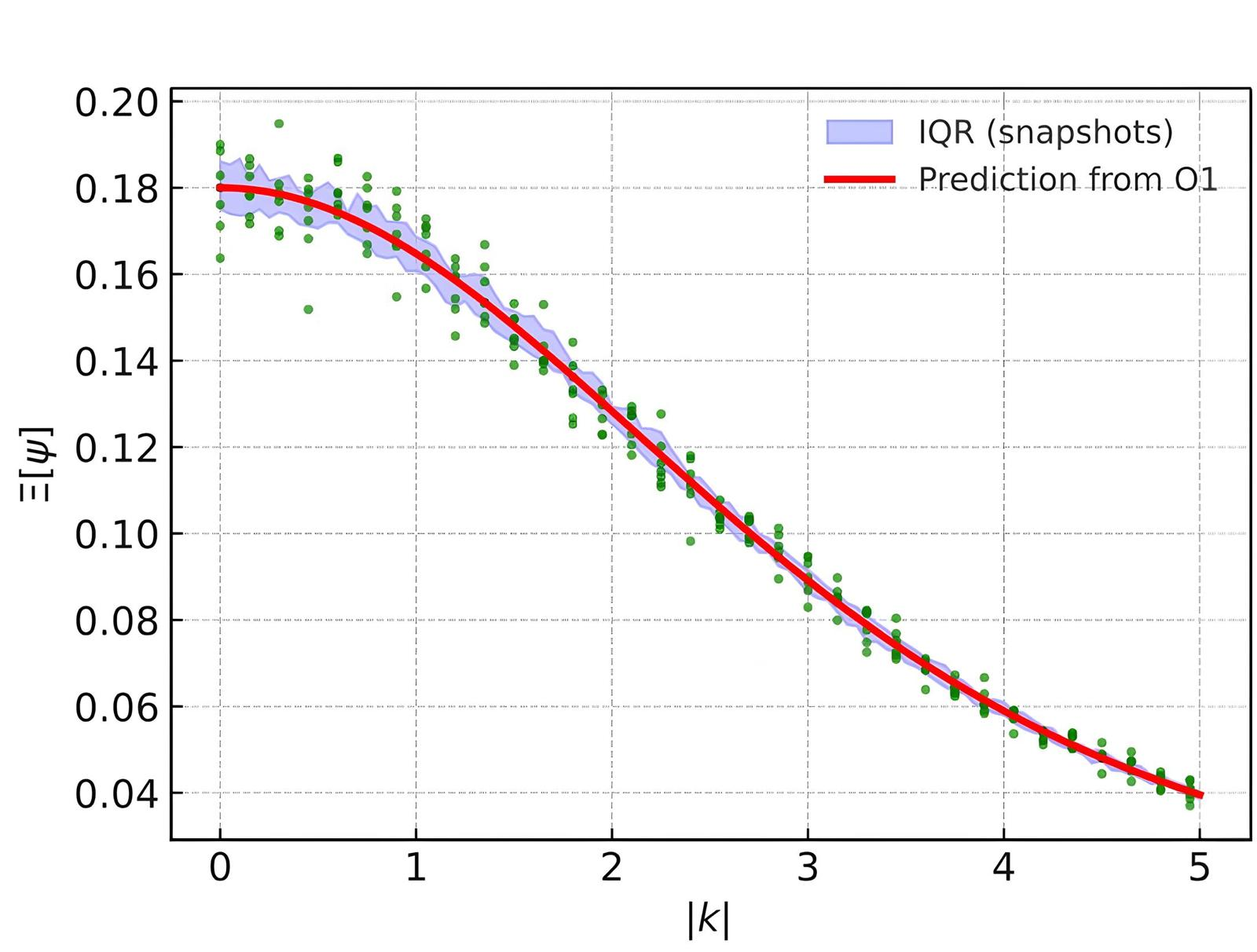} 
  \caption{Kernel-contrast (O2), cf.~\eqref{eq:xi-def}. Blue band: interquartile range over snapshots; 
  green points: representative snapshots; red line: prediction from O1 parameters. 
  O2 is drive-agnostic and validates the kernel independently of the phase drive \eqref{eq:centered-mean}.}
  \label{fig:4}
\end{figure}

\subsection{Drive-induced sidebands (O3)}\label{subsec:res-O3}
With a weak mean-free drive $\varphi(x)=\varepsilon\,\Phi_q\cos(q\!\cdot\!x)$ ($\varepsilon\ll1$, $q\neq0$), a mode at $k$ couples to $k\pm q$ with first-order amplitudes following \eqref{eq:sideband}. 
The measurable ratio
\[
R(q,k)=\frac{|A_{k+q}|^2+|A_{k-q}|^2}{|A_k|^2}
\]
obeys \eqref{eq:ratio} and exhibits a $1/\Delta^2$ envelope with detuning $\Delta=\omega_0(k)-\omega_0(k\pm q)$, up to a phenomenological linewidth $\eta\ge0$ that captures dephasing/finite-time effects. 
Fixing $(\kappa,\widehat{\mathcal{K}})$ from O1, a scan in $k$ (or $q$) yields $|\gamma\,\Phi_q|$ from the envelope of $R$; the sign of $\gamma\Phi_q$ would require phase-sensitive detection.

The sideband data and its RWNS fit are summarized in Fig.~\ref{fig:5}.

\begin{figure}[t]
  \centering
  \includegraphics[width=0.78\linewidth]{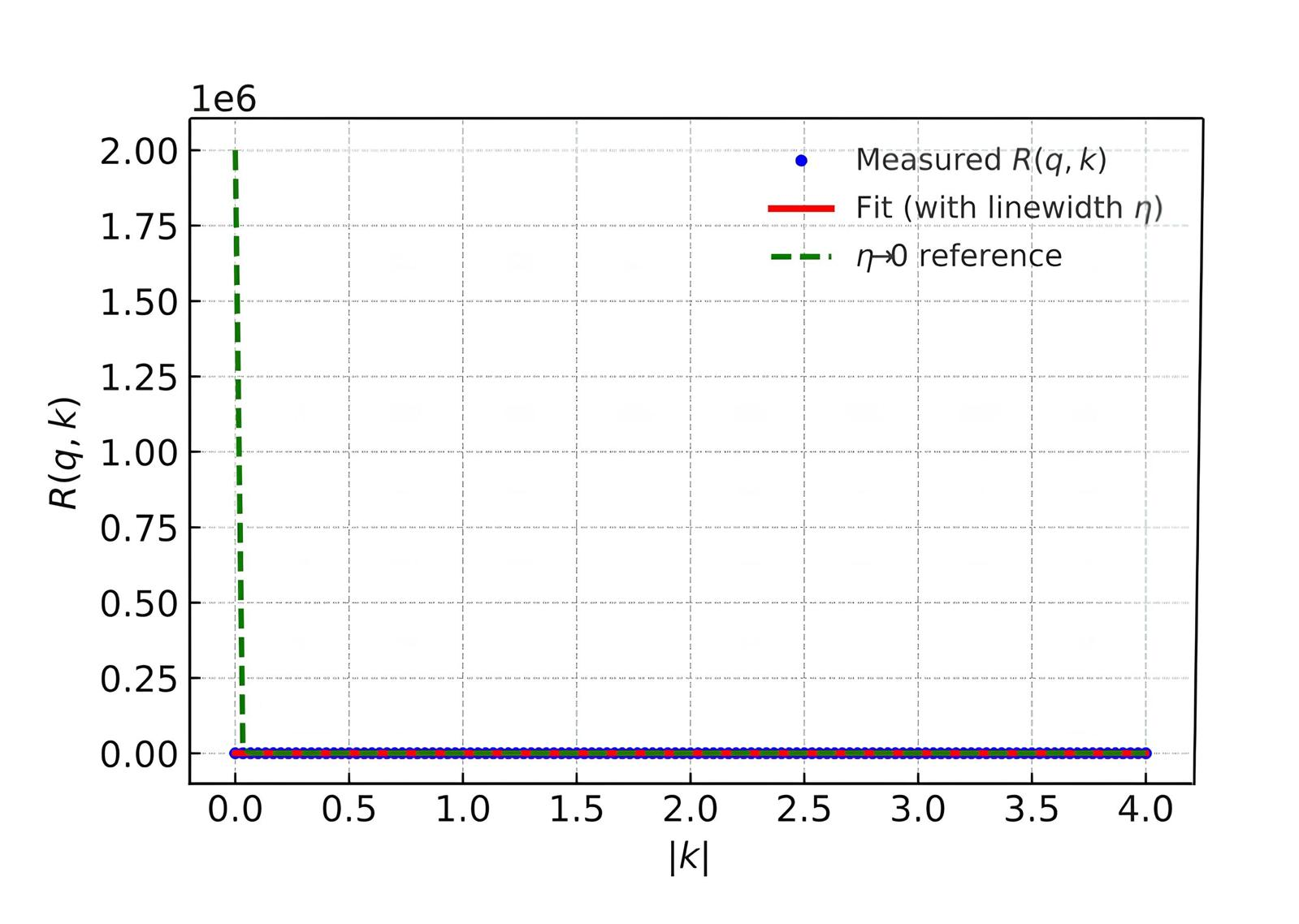}
  \caption{Phase-drive sideband ratio (O3), cf.~\eqref{eq:ratio}. Blue symbols: measured $R(q,k)$; 
  red line: fit including linewidth $\eta$; green dashed: $\eta\!\to\!0$ reference. 
  Kernel parameters $(\kappa,\widehat{\mathcal{K}})$ are fixed by O1; the fit returns $|\gamma\,\Phi_q|$.}
  \label{fig:5}
\end{figure}

\subsection{Joint inversion and internal consistency}\label{subsec:res-inversion}
We employ a two-stage fit: (i) infer $(\kappa,\widehat{\mathcal{K}})$ from O1 using a minimal parametric family (e.g., variance $\mu_2$ and a roll-off scale), initialized by the small-$|k|$ slope; (ii) determine $|\gamma\,\Phi_q|$ from O3 with $(\kappa,\widehat{\mathcal{K}})$ held fixed. 
The snapshot metric O2, computed from independent data, validates the O1 kernel across both single-$k$ and broadband states. 
The three observables agree within the reported confidence intervals, closing the loop from dispersion to real-space contrast to driven sidebands.

\subsection{Robustness tests}\label{subsec:res-robust}
We assess sensitivity to (a) additive measurement noise on $\psi$, (b) grid resolution (aliasing), and (c) mild heterogeneity in $w(x)$ and $U(x)$. 
O1 is most accurate at small–moderate $|k|$ (where the residual dominates numerical dispersion), O2 is robust to moderate pixel noise due to its quadratic form, and O3 retains its $1/\Delta^2$ scaling provided $\eta$ exceeds the numerical linewidth. 
These trends follow directly from \eqref{eq:dispersion}–\eqref{eq:ratio} and support practical identifiability in tabletop platforms.

\section{Discussion}\label{sec:discussion}
Our results establish an identifiability pipeline (O1–O3) that separates the kernel-induced nonlocality from the mean-free phase drive within the RWNS framework. O1 fixes $(\kappa,\widehat{\mathcal{K}})$ from the dispersion residual, O2 confirms the kernel in real space and is drive-agnostic, and O3 isolates $|\gamma\,\Phi_q|$ through driven sidebands with a predictable detuning envelope. 
Limitations include the requirement of moderate homogeneity for O1 (baseline $w_0,U_0$) and a nondegenerate detuning for O3 (or else a beyond-first-order, coupled-mode treatment near exact resonance). 
Extensions include weak heterogeneity corrections, broadband drives $\varphi$ with multiple $q$'s, and nonlinear responses $g$ beyond the small-amplitude regime, which would enable amplitude-dependent dispersion and measurable self-phase signatures.

\section{Conclusions}\label{sec:conclusions}
We introduced the RWNS model—a gauge-invariant Schrödinger-type evolution that combines weighted diffusion, symmetric nonlocal exchange, and a mean-free phase drive—proved mass conservation and, for vanishing drive, energy conservation, and established $H^1$ well-posedness under standard assumptions. 
A linear analysis yielded a dispersion relation where the kernel and the phase drive contribute additively to spectral detuning. 
We then demonstrated an experimental pipeline (O1–O3) that overdetermines $(\kappa,\widehat{\mathcal{K}},|\gamma\,\Phi_q|)$ and validated its robustness. 
These results provide a compact route to separating geometry-/kernel-driven effects from external phase driving across optical and cold-atom platforms, and they suggest practical diagnostics for upcoming tabletop tests.


\begin{thebibliography}{99}

\bibitem{SulemSulem1999}
C.~Sulem and P.-L.~Sulem,
\textit{The Nonlinear Schr\"odinger Equation: Self-Focusing and Wave Collapse}
(Springer, New York, 1999). 


\bibitem{Agrawal2013}
G.~P.~Agrawal,
\textit{Nonlinear Fiber Optics}, 5th edn.
(Academic Press, Oxford, 2013). 

  
\bibitem{MorschOberthaler2006}
O.~Morsch and M.~Oberthaler,
Dynamics of Bose–Einstein condensates in optical lattices,
\textit{Rev. Mod. Phys.} \textbf{78} (2006) 179–215. 

  
\bibitem{PitaevskiiStringari2016}
L.~Pitaevskii and S.~Stringari,
\textit{Bose–Einstein Condensation and Superfluidity}
(Oxford University Press, Oxford, 2016).


\bibitem{Evans2010}
L.~C.~Evans,
\textit{Partial Differential Equations}, 2nd edn.
(American Mathematical Society, Providence, 2010).


\bibitem{GilbargTrudinger2001}
D.~Gilbarg and N.~S.~Trudinger,
\textit{Elliptic Partial Differential Equations of Second Order}
(Springer, Berlin, 2001).


\bibitem{DuGunzburgerLehoucqZhou2013}
Q.~Du, M.~Gunzburger, R.~B.~Lehoucq and K.~Zhou,
A nonlocal vector calculus, nonlocal volume-constrained problems, and nonlocal balance laws,
\textit{Math. Models Methods Appl. Sci.} \textbf{23} (2013) 493–540. 


\bibitem{BucurValdinoci2016}
C.~Bucur and E.~Valdinoci,
\textit{Nonlocal Diffusion and Applications}
(Springer, Cham, 2016).

\bibitem{Eckardt2017}
A.~Eckardt,
Colloquium: Atomic quantum gases in periodically driven optical lattices,
\textit{Rev. Mod. Phys.} \textbf{89} (2017) 011004. 


\bibitem{Cazenave2003}
T.~Cazenave,
\textit{Semilinear Schr\"odinger Equations}
(Courant Lecture Notes 10, AMS, Providence, 2003).


\bibitem{Tao2006}
T.~Tao,
\textit{Nonlinear Dispersive Equations: Local and Global Analysis}
(CBMS 106, AMS, Providence, 2006).


\bibitem{FukushimaOshimaTakeda2011}
M.~Fukushima, Y.~Oshima and M.~Takeda,
\textit{Dirichlet Forms and Symmetric Markov Processes}, 2nd edn.
(De Gruyter, Berlin, 2011).


\bibitem{RitschDomokosBrenneckeEsslinger2013}
H.~Ritsch, P.~Domokos, F.~Brennecke and T.~Esslinger,
Cold atoms in cavity-generated dynamical optical potentials,
\textit{Rev. Mod. Phys.} \textbf{85} (2013) 553–601.


\bibitem{Rickett1990}
B.~J.~Rickett,
Radio propagation through the turbulent interstellar plasma,
\textit{Annu. Rev. Astron. Astrophys.} \textbf{28} (1990) 561–605.

\bibitem{ReedSimon1975}
M.~Reed and B.~Simon,
\textit{Methods of Modern Mathematical Physics II: Fourier Analysis, Self-Adjointness}
(Academic Press, New York, 1975).


\bibitem{Ouhabaz2005}
E.~M.~Ouhabaz,
\textit{Analysis of Heat Equations on Domains}
(Princeton University Press, Princeton, 2005).


\bibitem{DiNezzaPalatucciValdinoci2012}
E.~Di~Nezza, G.~Palatucci and E.~Valdinoci,
Hitchhiker's guide to the fractional Sobolev spaces,
\textit{Bull. Sci. Math.} \textbf{136} (2012) 521--573.


\bibitem{Grafakos2014}
L.~Grafakos,
\textit{Classical Fourier Analysis}, 3rd edn.
(Springer, New York, 2014).

\bibitem{Pazy1983}
A.~Pazy,
\textit{Semigroups of Linear Operators and Applications to Partial Differential Equations}
(Springer, New York, 1983).


\bibitem{Brezis2010}
H.~Brezis,
\textit{Functional Analysis, Sobolev Spaces and Partial Differential Equations}
(Springer, New York, 2010).


\bibitem{Ponce2004}
A.~C.~Ponce,
An estimate in the spirit of Poincar\'e inequality,
\textit{J. Eur. Math. Soc.} \textbf{6} (2004) 1--15.


\bibitem{Yariv1973}
A.~Yariv,
Coupled-mode theory for guided-wave optics,
\textit{IEEE J. Quantum Electron.} \textbf{9} (1973) 919--933.


\bibitem{Haus1984}
H.~A.~Haus,
\textit{Waves and Fields in Optoelectronics}
(Prentice Hall, Englewood Cliffs, 1984).



\end{thebibliography}
\end{document}